
\documentclass[conference,letterpaper]{IEEEtran}
\usepackage{threeparttable}
 \usepackage{physics}
 \usepackage{varwidth} 
 \usepackage{amsthm}

\makeatletter
\newtheorem*{rep@theorem}{\rep@title}
\newcommand{\newreptheorem}[2]{%
\newenvironment{rep#1}[1]{%
 \def\rep@title{#2 \ref{##1}}%
 \begin{rep@theorem}}%
 {\end{rep@theorem}}}
\makeatother

\makeatletter
\newtheorem*{rep@claim}{\rep@title}
\newcommand{\newrepclaim}[2]{%
\newenvironment{rep#1}[1]{%
 \def\rep@title{#2 \ref{##1}}%
 \begin{rep@claim}}%
 {\end{rep@claim}}}
\makeatother

\newreptheorem{theorem}{Theorem}
\newreptheorem{claim}{Claim}
\newreptheorem{lemma}{Lemma}

\usepackage[utf8]{inputenc}
\usepackage{amsmath,bm}
\usepackage{amsthm}
\usepackage{amssymb}
\usepackage{mathtools}
\usepackage{bbm}
\allowdisplaybreaks
\usepackage{graphicx}
\usepackage[dvipsnames]{xcolor}
\usepackage[english]{babel}
\usepackage[font=small]{caption}
\usepackage{subcaption}
\usepackage{wrapfig}
\usepackage{tabularx}
\usepackage{multirow}
\usepackage{multicol}
\usepackage{thmtools, thm-restate}
\usepackage[normalem]{ulem}
\usepackage[noadjust]{cite}
\usepackage{makecell}
\usepackage{hyphenat}
\usepackage{tikz-cd}
\hyphenation{Gil-bert}
\hyphenation{Var-sha-mov}



\usepackage{enumitem}

\usepackage[final]{changes}

\newtheorem{theorem}{Theorem$\!$}
\newtheorem{example}[theorem]{Example$\!$}
\newtheorem{lemma}[theorem]{Lemma$\!$}

\newtheorem{definition}[theorem]{Definition$\!$}

\usepackage{graphicx}

\usepackage{xcolor}

\newcommand{\sseneq}{\xrightarrow{\vbox{\hbox to 0.32cm{\,\customneq}\kern-0.5ex}}}

\newcommand{\cC}{\mathcal{C}}

\newcommand{\cP}{\mathcal{P}}

\newcommand{\mybold}[1]{\bm{#1}}

\newcommand{\bd}{{\mybold{d}}}

\newcommand{\bsf}{{\mybold{f}}}

\newcommand{\bl}{{\mybold{l}}}

\newcommand{\bp}{{\mybold{p}}}

\newcommand{\br}{{\mybold{r}}}
\newcommand{\bs}{{\mybold{s}}}

\newcommand{\bu}{{\mybold{u}}}
\newcommand{\bv}{{\mybold{v}}}
\newcommand{\bw}{{\mybold{w}}}
\newcommand{\bx}{{\mybold{x}}}
\newcommand{\by}{{\mybold{y}}}
\newcommand{\bz}{{\mybold{z}}}

\newcommand{\one}{\mathbf{1}}
\newcommand{\zero}{\mathbf{0}}

\newcommand{\balpha}	{\mybold{\alpha}}
\newcommand{\bbeta}		{\mybold{\beta}}

\DeclareMathOperator{\VT}{VT}
\DeclareMathOperator{\DVT}{\Delta\!VT}

\usepackage{breqn}
\ifCLASSINFOpdf
\else
\fi
\hyphenation{op-tical net-works semi-conduc-tor}

\addtolength{\topmargin}{9mm}

%
%
\usepackage[utf8]{inputenc} 
\usepackage[T1]{fontenc}
\usepackage{url}
\usepackage{ifthen}
\usepackage{cite}
\usepackage{amsmath} 


\interdisplaylinepenalty=2500 

\hyphenation{op-tical net-works semi-conduc-tor}


\newcommand{\customneq}{
  \tikz[scale=0.5,baseline=-0.5ex,line width=0.6pt]{
    \draw (0,0) -- (0.36,0);
    \draw (0,-0.12) -- (0.36,-0.12); 
    \draw (0.08,-0.22) -- (0.28,0.09); 
  }
}

\usepackage{cleveref}
\begin{document}
\title{Asymptotically Optimal Codes Correcting One Substring Edit} 



\author{%
  \IEEEauthorblockN{Yuting Li\IEEEauthorrefmark{1},
                    Yuanyuan Tang\IEEEauthorrefmark{2},
                    Hao Lou\IEEEauthorrefmark{2},
                    Ryan Gabrys\IEEEauthorrefmark{3},
                    and Farzad Farnoud\IEEEauthorrefmark{1}\IEEEauthorrefmark{2}}
  \IEEEauthorblockA{\IEEEauthorrefmark{1}%
                    Computer Science, University of Virginia, USA, \texttt{mzy8rp@virginia.edu}}
  \IEEEauthorblockA{\IEEEauthorrefmark{2}%
                    Electrical \& Computer Engineering,
                    University of Virginia, USA,   \\ \texttt{\{yt5tz,hl2nu,farzad\}@virginia.edu}}
  \IEEEauthorblockA{\IEEEauthorrefmark{3}%
                    Calit2, University of California-San Diego, USA, \texttt{rgabrys@ucsd.edu}}
}

\maketitle


\begin{abstract}
    The substring edit error is the operation of replacing a substring $\bu$ of $\bx$ with another string $\bv$, where the lengths of $\bu$ and $\bv$ are bounded by a given constant $k$. It encompasses localized insertions, deletions, and substitutions within a window. Codes correcting one substring edit have redundancy at least $\log n+k$.
In this paper, we construct codes correcting one substring edit with redundancy $\log n+O(\log \log n)$, which is asymptotically optimal. The full version of this paper is available online.\footnote{\url{https://www.ece.virginia.edu/~ffh8x/papers/eccss.pdf}}
\end{abstract}

\section{Introduction}\label{sec:intro}

In data transmission and storage, especially in data storage in DNA~\cite{tabatabaeiyazdi2015} and other emerging media, the need for robust error correction for a diverse set of errors is an important and challenging problem. This paper presents a family of asymptotically optimal codes capable of correcting a burst of localized edits, encompassing combinations of insertions, deletions, and substitutions. In addition to their application in error-correction, the codes provide a promising approach to the challenge of synchronizing multiple copies of related files, where bursts of edits are common. The data synchronization problem, also known as document exchange, has been explored in various works demonstrating that error-correcting codes, especially those accommodating file edits, can substantially reduce the communication required to maintain file consistency~\cite{orlitsky1991b}.

Specifically, we focus on correcting a single burst of errors of length at most $k$, which we refer to as a \mbox{$k$-\textit{\textbf{substring edit}}}. Such an error may be the result of insertions, deletions, or substitutions, occurring within a bounded interval of our strings. Formally, a $k$-substring edit in a string $\bx$ is the operation of replacing a substring $\bu$ of $\bx$ with another string $\bv$, where $|\bu|,|\bv|\le k$. An analysis of real and simulated data in~\cite{tang2023} reveals that substring edits are common in file synchronization and DNA storage, where viewing errors as substring edits leads to smaller redundancies. 
%

In this paper, we assume that $k$ is a fixed constant. Notice that for the setting where we replace a string $\bu$ with another string $\bv$ of the same length (i.e., $\abs{\bu} =\abs{\bv}$), a substring edit is equivalent to a burst of substitutions. Furthermore, if $\bv$ is the empty string, then a substring edit is equivalent to a burst of deletions. Analogously, if $\bu$ is the empty string, then the substring edit is a burst of insertions. Despite the fact that these specialized types of substring edit have each received significant attention in the past, the problem of constructing codes for substring edits has received relatively less attention, with the exception of \cite{tang2023} and~\cite{bours1994codes}. A critical observation in this context is that the ability to correct an arbitrarily long burst of deletions does not enable correcting a $k$-substring edit~\cite[Lemma~1]{tang2023}.   

Using simple counting arguments, it can be shown that a code correcting one substring edit has redundancy at least $\log n+k$ and at most roughly $2 \log n$.
In \cite{tang2023}, the authors present an efficient construction for a code that corrects one substring edit with redundancy roughly $2\log n$, which matches the existential upper bound of redundancy. However, the question of whether it is possible to construct a code with less than $2 \log n$ bits of redundancy remained open. In this paper, we provide an affirmative answer to this question and construct codes correcting one substring edit with redundancy roughly $\log n+O(\log\log n)$, i.e., at most $O(\log\log n)$ bits larger than the optimal redundancy. Hence, the codes we propose are asymptotically  optimal in terms of redundancy and improve the state-of-the-art redundancy by a factor of 2. 

The rest of the paper is organized as follows: In Section~\ref{sec:prelim}, we introduce basic definitions and notation, review our techniques, and compare with related works. In Section~\ref{sec:2}, we construct asymptotically optimal codes correcting one substring edit.


\section{Notation, Overview, and Related Work}\label{sec:prelim}
\subsection{Basic Definitions and Notations}
We will deal with two types of strings, strings over $\{0,1\}$ and strings over the set of nonnegative integers. When necessary to identify the alphabet, we refer to a string as a \emph{binary string} or an \emph{integer string}, respectively. Strings are denoted by bold symbols, e.g., $\bu=u_1\dotsm u_n$. Let $[i,j]$ represent the integers $i,i+1, \dotsc, j$. Furthermore, the set $[1,j]$ is denoted as $[j]$.
The substring of $\bu$ starting at position $i$ and ending at position $j$ is denoted by $\bu_{[i,j]}$. 
We use $\abs{\bu}$ to denote the length of $\bu$. For $\bz\in \mathbb{N}^*$ and a positive integer $A$, we say that $\bz$ is $A$-bounded if each symbol of $\bz$ is at most $A$. 

For binary strings $\bx,\by\in\{0,1\}^*$, we denote a $k$-substring edit by $\bx\xrightarrow[]{}\by$. If $\bx\xrightarrow[]{}\by$ and $\bx\neq \by$, then we may also write $\bx\sseneq \by$. In our construction, substring edits over integer strings also arise. For clarity, we use a different notation for these: For two integer strings $\bz,\bw\in\mathbb N^*$, we let $\bz\Rightarrow\bw$ denote a substring edit (not necessarily a $k$-substring edit) and let $\bz\xRightarrow{j:a,b}\bw$ denote the substring edit that deletes a substring of length $a$ starting in position $j$, and inserts a substring of length $b$ in its place. This operation is a $\max(a,b)$-substring edit. 

A partition $\cP$ of $\bx$ is a rule under which one divides $\bx$ into parts. We use $n_{\cP}(\bx)$ to denote the number of parts and $\bx^{\cP}$ to denote the vector $(\bx_1,\bx_2,\dotsc,\bx_{n_{\cP}(\bx)})$, where $\bx_i$, $1\leq i\leq n_{\cP}(\bx)$, denotes a part. We call a string $(\cP,\delta)$-dense if all elements of $\bx^{\cP}$ have length at most $\delta$.
Suppose $f$ is a function on strings. Then we use $\bsf^{\cP}(\bx)$ to denote the string $f(\bx_1)f(\bx_2)\dotsm f(\bx_{n_{\cP}(\bx)})$. When the partition $\cP$ is clear from the context, we may simply use $\bsf(\bx)$ to denote $\bsf^{\cP}(\bx)$. (Note that $f(\bx)$, in contrast to $\bsf(\bx)$, is simply the result of applying $f$ to $\bx$.)
For a (binary or integer) string $\bz$, its VT sketch is defined as
$\VT(\bz)=\sum_{i=1}^{\abs{\bz}} i z_i $.

\subsection{Overview of the Techniques}\label{ssc:overview}
The overview of our approach to correcting one substring edit error is given in Figure~\ref{fig:figure1_}. Here, $\bx$ is the input to the channel while $\by$ is the output. The first step in correcting the error is to approximately identify its position. To achieve this, we rely on a key insight that goes back to the pioneering work of Levenshtein~\cite{levenshtein1967asymptotically} on correcting a burst of at most 2 deletions. Namely, while the VT sketch applied to $\bx$ can locate only a single deletion, it can be more powerful when applied to carefully designed functions of $\bx$. 
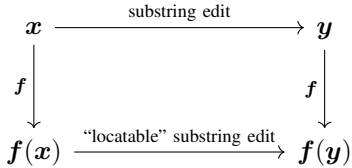
\begin{figure}
\centering
\begin{tikzcd}[column sep=80pt,row sep=30pt]
\bx \arrow[r,"\text{substring edit}"] \arrow[d,"\bsf"'] &
\by \arrow[d,"\bsf"'] 
\\
\bsf(\bx) \arrow[r,"\text{``locatable'' substring edit}"] &
\bsf(\by) 
\end{tikzcd}
\caption{converting a substring edit to a ``locatable'' substring edit}
\label{fig:figure1_}
\end{figure}
\label{subsubsec:tech1}
Previous work has used this observation to correct deletions within a window, as well as to correct one deletion or one adjacent transposition\cite{lenz2020optimal}\cite{bitar2021optimal}\cite{Gabrysbeyond2023}. 
Our main contribution in this direction is to introduce a novel mapping that, when used in conjunction with a VT sketch, can locate the position of any $k$-substring edit to within a small interval. In particular, this mapping allows us to locate a burst of substitutions, 
a challenging task since such an edit may leave the sequence composed of lengths between ``markers'' untouched. 

To construct the aforementioned mapping, denoted in Figure~\ref{fig:figure1_} by $\bsf$, we identify a set of conditions on a substring edit over an integer string that, if satisfied, the location of the substring edit can be approximately identified using the VT sketch and some other small amount of information (Lemma~\ref{lem:locationA}). Such a substring edit is called ``locatable'', a term that is defined precisely in Definition~\ref{def:locatable}. 
Specifically, our approach will be to find a partition rule $\cP$ and a function $f$ such that if $\bx\sseneq \by$, then $\bsf(\by)=\bsf^{\cP}(\by)$ is obtained from $\bsf(\bx) = \bsf^{\cP}(\bx)$ by a locatable substring edit. 
Then we restrict $\bx$ to be $(\cP,\delta)$-dense. 
If $\bx\sseneq \by$, we treat $\bsf^{\cP}(\bx)$ and $\bsf^{\cP}(\by)$ as the $\bz$ and $\bw$ in Lemma~\ref{lem:locationA}, and determine the edit position of this locatable substring edit to within an interval of length $O(\delta)=O(\log n)$. Since we restrict $\bx$ to be $(\cP,\delta)$-dense, which means $\abs{\bx_i}\leq \delta$ for any $i$, we can determine the original edit position to within an interval of length $O(\delta^2)$ (a technique used in \cite{bitar2021optimal}), which is then corrected. The case in which $\bx=\by$ will also be handled in \Cref{thm:1substringedit}. The VT sketch used in our construction will require a redundancy of about  $\log n$ bits, while the other components will have negligible redundancy.

\subsection{Related Works}
Codes correcting localized errors have been a subject of extensive study over time. Fire codes \cite{blahut2003algebraic} are designed to construct a burst of $k$ substitutions with roughly $\log n$ redundancy. Several works, including \cite{bours1994codes,Cheng2014,schoeny2017codes}, have construct burst-deletion correcting codes, where the length of the bursts is known in advance. More related to this work are codes that can correct a variable number of deletions. In a pioneering work by Levenshtein's work \cite{levenshtein1967asymptotically}, optimal codes are constructed to correct a burst of at most two consecutive deletions. 

For longer deletions and non-consecutive but localazide deletions, several works \cite{lenz2020optimal,bitar2021optimal,Gabrysbeyond2023,tang2023} aim to first approximately locate the deletion. Their method uses the VT sketch of an integer string that is related to the original binary string to locate the approximate deletion position. Specifically, in \cite{lenz2020optimal}, \cite{bitar2021optimal}, and \cite{tang2023}, each symbol of the integer string is the length of each part of the original binary string with respect to some partition.
Using this method, Lenz et al. \cite{lenz2020optimal} propose codes correcting a variable-length burst of deletions with nearly optimal redundancy, but the deletions still need to be consecutive. Bitar et al. \cite{bitar2021optimal} later construct nearly optimal codes that remove the restriction of consecutive deletions, so that it could correct deletions within a window that do not necessarily occur consecutively. Also using VT sketch of the length integer string, Tang et al. \cite{tang2023} construct codes correcting a substring edit, but the VT sketch of the length integer string can locate the edit position only when the substring edit changes the length of the original binary string. So the construction from \cite{tang2023} needs another $\log n$ bits to handle the other cases, and the total redundancy is about $2\log n$ bits, which is not optimal.
 By using a different integer string, Gabrys et al. \cite{Gabrysbeyond2023} further construct codes correcting one deletion
or one adjacent transposition with near-optimal redundancy, but their approach does not address the setup where additional edits may occur.


\section{Asymptotically Optimal Codes Correcting One Substring Edit}\label{sec:2}
In this section, we present the code construction, prove its correctness, and determine its redundancy. Based on our earlier discussion in Subsection~\ref{ssc:overview}, the construction has three interrelated components: $a$) identifying a set of conditions that make a substring edit over an integer string ``locatable'' and showing that the position of a locatable substring edit can be found approximately; $b$) constructing a partition rule $\cP$ and a mapping $f$ such that if $\bx\sseneq\by$, then $\bsf^{\cP}(\by)$ is obtained from $\bsf^{\cP}(\bx)$ through a ``locatable'' substring edit; and $c$) correcting the edit in $\by$ and recovering $\bx$. We discuss each in a subsection below. In the next subsection, we discuss each of these components. Then we prove the correctness of the overall approach.

\subsection{Components of the Code Construction}\label{subsec:components}



\paragraph{Locatable substring edits}
We will now define the notion of locatable substring edit in Definition~\ref{def:locatable}, and prove Lemma~\ref{lem:locationA}, which roughly says that for two $\delta$-bounded integer strings $\bz$ and $\bw$, if $\bw$ is obtained from $\bz$ by a locatable substring edit at position $j$, then given $\bw$, $\VT(\bz)$, and some other little information, one can determine $j$ to within an interval of length $O(\delta)$. 

\begin{definition}\label{def:locatable}
    For two positive integer strings $\bz$ and $\bw$, we say that $\bw$ is obtained from $\bz$ by a \emph{\textbf{locatable}} $K$-substring edit at position $j$ if $\bz\xRightarrow{j:a,b}\bw$, where $a,b\leq K$, and
    \begin{enumerate}
    \item The difference of the sum of $\bz$ and the sum of $\bw$ is bounded by all $w_i$, namely $\abs{\sum_{j=1}^{\abs{\bz}}z_j-\sum_{j=1}^{\abs{\bw}}w_j}< w_i$ for all $i$'s.
    \item $\abs{\bz}\neq \abs{\bw}$ or $\sum_{j=1}^{\abs{\bz}}z_j\neq\sum_{j=1}^{\abs{\bw}}w_j$.
    \end{enumerate}
\end{definition}

Now we show that for $A$-bounded strings $\bz$ and $\bw$, if $\bw$ is obtained from $\bz$ by a locatable $K$-substring edit at position $j$, then one can determine $j$ to within an interval of length $O(A)$ given $\VT(\bz)\bmod O(m)$ (for some parameter $m$ specified in the lemma statement) and some other little information of $\bz$.

\begin{lemma}\label{lem:locationA}
For $A$-bounded strings $\bz$ and $\bw$, if $\bw$ is obtained from $\bz$ by a locatable $K$-substring edit at position $j$, then
 one can determine $j$ to within an interval of length $6K^2A$ given $\bw$, $\abs{\bz}-\abs{\bw}$, $\sum_{i=1}^{\abs{\bz}}z_i-\sum_{i=1}^{\abs{\bw}}w_i$, and $\VT(\bz)\bmod m$, where $m>8K^2\left(\sum_{i=1}^{\abs{\bz}}z_i+A\right)$.
\end{lemma}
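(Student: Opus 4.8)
The plan is to compute $\VT(\bz)-\VT(\bw)$ explicitly as a function of the unknown edit position $j$, and then recover $j$ from its residue modulo $m$. Write $\ell=a-b=\abs{\bz}-\abs{\bw}$ and $D=\sum_i z_i-\sum_i w_i$ (both given), and let $S_w=\sum_i w_i$. I would split $\VT(\bz)-\VT(\bw)$ according to the decomposition of $\bz$ and $\bw$ into the common prefix $[1,j-1]$, the edited window, and the common suffix. The prefix contributes nothing; the suffix, whose symbols are shifted by exactly $\ell$ positions, contributes $\ell$ times its symbol-sum; and the window contributes $jD+E$ with a small remainder $\abs{E}\le AK^2$ coming from the within-window offsets (each symbol at most $A$, each offset at most $K$). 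Rewriting the suffix sum as $S_w-\sum_{i<j}w_i-(\text{sum of the inserted window})$ and absorbing the resulting $O(K^2A)$ term into the remainder yields
\[
\VT(\bz)-\VT(\bw)=\tilde h(j)+\tilde E,\qquad \tilde h(p):=pD+\ell\Bigl(S_w-\sum_{i<p}w_i\Bigr),
\]
with $\abs{\tilde E}\le 2AK^2$. The point of absorbing that term is that $\tilde h$ is now computable at any candidate position $p$ from $\bw$ and the given data alone, without needing $a$ or $b$ individually.

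The heart of the argument is to show $\tilde h$ is strictly monotone with increments bounded below. Its increment is $\tilde h(p+1)-\tilde h(p)=D-\ell\,w_p$, and I would verify $\abs{D-\ell w_p}\ge 1$ with a sign independent of $p$, splitting on $\ell$. When $\ell=0$, Condition~2 forces $D\ne 0$, so every increment equals $D$. When $\ell\ne 0$, Condition~1 ($\abs{D}<w_p$) forces $D-\ell w_p$ to have the sign of $-\ell$ and magnitude at least $1$ (for $\ell\ge 1$, $D\le w_p-1$ gives $D-\ell w_p\le -1$, and symmetrically for $\ell\le -1$). Hence $\tilde h$ is strictly monotone and $\abs{\tilde h(p)-\tilde h(p')}\ge \abs{p-p'}$ for all candidate positions.

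To preclude collisions modulo $m$ I also need an upper bound on the total variation of $\tilde h$. Summing the increments and using $\abs{\bw}\,\abs{D}<S_w$ — which follows from Condition~1 because $\abs{D}<\min_i w_i\le S_w/\abs{\bw}$ — I would bound the total range of $\tilde h$ by $(K+1)S_w<(K+1)(\sum_i z_i+A)$, which by $m>8K^2(\sum_i z_i+A)$ is strictly less than $m-4AK^2$. The decoder knows $\VT(\bz)\bmod m$ and $\bw$, hence $(\VT(\bz)-\VT(\bw))\bmod m$, and accepts every position $p$ for which $(\VT(\bz)-\VT(\bw)-\tilde h(p))\bmod m$ lies in the symmetric window $[-2AK^2,2AK^2]$. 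The true $j$ is accepted. If $p,p'$ are both accepted then $\tilde h(p)-\tilde h(p')\equiv s\pmod m$ with $\abs{s}\le 4AK^2$; the total-range bound forces the integer multiple of $m$ to vanish, so $\abs{\tilde h(p)-\tilde h(p')}=\abs{s}\le 4AK^2$, and combined with $\abs{\tilde h(p)-\tilde h(p')}\ge\abs{p-p'}$ this gives $\abs{p-p'}\le 4AK^2\le 6K^2A$. Thus all accepted positions, and in particular $j$, lie in an interval of length $6K^2A$.

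I expect the main obstacle to be exactly the pair of quantitative claims that consume Conditions~1 and~2: the monotonicity-with-unit-step bound on $\tilde h$ and the matching upper bound on its total variation. The delicate point is calibrating the constants so that the total range stays below $m-4AK^2$, which is what rules out any modular wraparound and lets the single approximate equation $\VT(\bz)-\VT(\bw)\equiv\tilde h(j)+\tilde E$ pin $j$ down to a short interval.
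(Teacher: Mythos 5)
Your proof is correct and follows essentially the same route as the paper's: your $\tilde h$ is precisely the paper's function $\eta(v,\bz,\bw)=(\abs{\bz}-\abs{\bw})\sum_{i\ge v}w_i+Dv$, and you establish the same three facts (strict monotonicity via the same case split on the two locatability conditions, an $O(K^2A)$ approximation of $\VT(\bz)-\VT(\bw)$ by $\tilde h(j)$, and a size bound that prevents modular wraparound). The only cosmetic difference is in the anti-aliasing step: the paper bounds $\abs{\VT(\bz)-\VT(\bw)}\le 4K^2\left(\sum_i z_i+A\right)<m/2$ and recovers the difference exactly from its residue, whereas you bound the total variation of $\tilde h$ and rule out wraparound between accepted candidates directly; both use Condition 1 in the same way and yield the same $6K^2A$ interval.
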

\begin{proof}
Suppose $\bz\xRightarrow[]{j:a,b}\bw$, where $a,b\leq K$ and $j\in [1,\abs{\bz}-a+1]$.
Let $\eta(v,\bz,\bw):= (\abs{\bz}-\abs{\bw})\sum_{i= v}^{\abs{\bw}}w_i+(\sum_{i=1}^{\abs{\bz}}z_i-\sum_{i=1}^{\abs{\bw}}w_i) v,$ where $v$ is an integer. Note that since we know $\bw$, $\abs{\bz}-\abs{\bw}$, and $\sum_{i=1}^{\abs{\bz}}z_i-\sum_{i=1}^{\abs{\bw}}w_i$, we know the expression of $\eta(\cdot,\bz,\bw)$. Let $\DVT:= \VT(\bz)-\VT(\bw)$. We will prove the following. 

\begin{enumerate}
    \item $\eta(\cdot,\bz,\bw)$ is strictly monotone on $[1,\abs{\bw}+1]$.
    \item $\abs{\DVT-\eta(j,\bz,\bw)}< 3K^2A$.
    \item    $\abs{\DVT}\leq 4K^2\left(\sum_{i=1}^{\abs{\bz}}z_i+A\right).$
\end{enumerate}

For 1), note that 
\begin{dmath*}
    \eta(v,\bz,\bw)-\eta(v-1,\bz,\bw)=(\abs{\bz}-\abs{\bw})w_v+\sum_{i=1}^{\abs{\bz}}z_i-\sum_{i=1}^{\abs{\bw}}w_i.
\end{dmath*}
If $\abs{\bz}\neq \abs{\bw}$, since $\bw$ is obtained from $\bz$ by a locatable $K$-substring edit, we have $\abs{\sum_{i=1}^{\abs{\bz}}z_i-\sum_{i=1}^{\abs{\bw}}w_i}<w_v$. 
So $\eta(\cdot,\bz,\bw)$ is strictly monotone on $[1,\abs{\bw}+1]$.
If $\abs{\bz}=\abs{\bw}$, since $\bw$ is obtained from $\bz$ by a locatable $K$-substring edit, we have $\abs{\sum_{i=1}^{\abs{\bz}}z_i-\sum_{i=1}^{\abs{\bw}}w_i}\neq 0$. So, $\eta(\cdot,\bz,\bw)$ is strictly monotone on $[1,\abs{\bw}+1]$.

For 2), since $\bz\xRightarrow[]{j:a,b}\bw$, we have 

        \begin{dmath*}
            \DVT
        =f(j,\bz,\bw)+\sum_{i=1}^{a-1}iz_{j+i}-\sum_{i=1}^{b-1}iw_{j+i}-(\abs{\bz}-\abs{\bw})\sum_{i= j}^{j+b-1}w_i.
        \end{dmath*}
    Since $\bz$ and $\bw$ are $A$-bounded strings, 
    we have $\abs{\sum_{i=1}^{a-1}iz_{j+i}}\leq a^2A$, $\abs{\sum_{i=1}^{b-1}iw_{j+i}}\leq b^2A$, and $\abs{\sum_{i= j}^{j+b-1}w_i}\leq bA$. So we have 
    \begin{dmath}\label{eq:1}
        \abs{\eta(j,\bz,\bw)-\DVT}\leq (a^2+b^2)A+\abs{a-b}bA< 3K^2A.        
    \end{dmath}

For 3), since $\eta(\cdot,\bz,\bw)$ is monotone on $[1,\abs{\bw}+1]$, 
\begin{align}\label{eq:aaa}    \abs{\eta(j,\bz,\bw)}\leq&\abs{\eta(1,\bz,\bw})+\abs{\eta(\abs{\bw}+1,\bz,\bw)}\\
    \leq & K^2\left(\sum_{i=1}^{\abs{\bz}}z_i+A\right).\nonumber
\end{align}

    Thus, by \eqref{eq:1} and \eqref{eq:aaa}, we have
    \begin{dmath*}
        \abs{\DVT}\leq \abs{\DVT-\eta(j,\bz,\bw)}+\abs{\eta(j,\bz,\bw)}\leq 4K^2\left(\sum_{i=1}^{\abs{\bz}}z_i+A\right).
    \end{dmath*}
Now since we know $\VT(\bz)\bmod m$, we can find $\DVT$. Since $\eta(\cdot,\bz,\bw)$ is strictly monotone on $[1,\abs{\bw}+1]$ and 
     $\abs{\DVT-\eta(j,\bz,\bw)}< 3K^2A$, we can determine $j$ to within an interval of length $6K^2A$.
\end{proof}

\paragraph{Constructing a mapping to convert a general substring edit to a locatable edit}
Our goal here is to construct a partition $\cP$ and a function $f$ over strings such that if $\bx\sseneq\by$, then $\bsf(\by)=\bsf^{\cP}(\by)$ is obtained from $\bsf(\bx)=\bsf^{\cP}(\bx)$ by a locatable substring edit. We first give the partition rule $\cP$. Let $\bp=0^k1^k$, which we call a pattern. 

We define the partition rule $\cP$ as follows.
For $\bx\in \{0,1\}^*$, let $\bx^{\cP}=(\bx_1,\bx_2,\dotsc,\bx_{n_{\cP}(\bx)})$, where the starting position of $\bx_1$ is the leftmost bit of $\bx$ and the starting position of $\bx_i$ ($i\geq 2$) is the leftmost bit of the $(i-1)$th $\bp$ in $\bx$. 

\begin{example}
    Let $k=3$, $\bp=0^31^3$, and$$\bx=\overbrace{01001}^{\bx_1}\overbrace{\underbrace{000111}_{\bp}100}^{\bx_2}\overbrace{\underbrace{000111}_{\bp}101001}^{\bx_3}\overbrace{\underbrace{000111}_{\bp}1}^{\bx_4}.$$ To get $\bx^{\cP}$, we first find all the patterns, and break it up at the left side of each pattern. So we get $\bx^{\cP}=(01001,000111100,000111101001,0001111)$.
\end{example}
We now explain the main idea behind the construction of the function $f$. 
For any lowercase function $u$ over strings, we use the uppercase function $U$ to denote the sum of all the symbols of $\bu(\bx)$, namely 
$$U(\bx):= \sum_{i=1}^{n_{\cP}(\bx)} u(\bx_i) = \sum_{i=1}^{|\bu(\bx)|} u(\bx_i),$$
and use $\Delta U$ to denote $U(\bx)-U(\by)$. For simplicity, we only consider the second condition of the locatable substring edit (see the full version for a complete treatment). Applied to the substring edit $f(\bx)\Rightarrow f(\by)$, the second condition of \Cref{def:locatable} requires that if $\bx\sseneq \by$, then $\abs{\bsf(\bx)}\neq \abs{\bsf(\by)} \text{ or } \Delta F\neq 0$
. Note that if the number of patterns changes ($n_{\cP}(\bx)\neq n_{\cP}(\by)$), then $\abs{\bsf(\bx)}\neq \abs{\bsf(\by)}$. Hence, we assume that the number of patterns does not change. We want to ensure that $\Delta F=F(\bx)-F(\by)\neq 0$. To better explain our approach, we first outline two different approaches that have increasing levels of complexity before arriving at our actual solution, which is described as the third attempt.
 
\textbf{First attempt.}
For a string $\bx$, let $l(\bx)$ represent the length of $\bx$. Note that $L(\bx)$ is also the length of $\bx$. In our first attempt, we let $f(\cdot)=l(\cdot)$. Then $F(\bx)=L(\bx)$. In this case, if the lengths of the deleted and inserted strings differ, then $\Delta F\neq 0$. However, $\Delta F=0$ occurs when the substring edit is a burst of substitutions. In fact, in \cite{tang2023}, the authors use $f(\cdot)=l(\cdot)$, and use another $\log n$ bits of redundancy to handle a burst of substitutions separately. This leads to a redundancy of $2\log n$ bits. In this work, we will introduce two other novel functions, $d$ (second attempt), and $n_{1^k}$ (third attempt), to resolve this issue.

\textbf{Second attempt.}
Let $d:\{0,1\}^n\rightarrow [2^k]$ be a function that can detect a burst of $k$ substitutions, where the $j$th bit of $d(\bx)$ is the parity of every $k$th index of $\bx$ stating at position $j$, namely, 
$$d(\bx)
:=\left(\bigoplus_{i\equiv_k 1 }x_i,\bigoplus_{i\equiv_k 2 }x_i,\dotsc,\bigoplus_{i\equiv_k k}x_i\right),$$ 
where $\oplus$ is addition in $\mathbb F_2$. Note that $\abs{\Delta D}\leq 2^k$.
We let $f(\cdot)=d(\cdot)+Cl(\cdot)$, where $C$ is a sufficiently large constant. Then, $\Delta F=\Delta D+C\Delta L$. 

Now, if $\Delta L\neq 0$, then $\Delta F\neq 0$ provided that we choose $C> 2^k$. If $\Delta L=0$, then the edit transforming $\bx$ into $\by$ is a burst of substitutions. At first glance, it may seem that in this case $\Delta D\neq 0$, and thus $\Delta F \neq 0$. But while this is the case if this burst of substitutions does not involve any patterns, it may not be the case otherwise. The next example shows that  if this burst of substitutions involves some patterns, then $\Delta D$ may be $0$. 
\begin{example}\label{ex:notwork}
     Let $k=3$, $\bp=0^31^3$, and 
     \begin{align*}     \bx=&\overbrace{000011}^{\bx_1}\overbrace{\underbrace{000\textcolor{red}{\one\one\one}}_{\bp}001111}^{\bx_2},\\   \by=&\overbrace{000011000\textcolor{red}{\zero\one}}^{\by_1}\overbrace{\underbrace{\textcolor{red}{\zero}00111}_{\bp}1}^{\by_2},\end{align*}
     where the edited substring is in bold and red.
    Then $\bl(\bx)=l(\bx_1)l(\bx_2)=(6,12)$, $\bl(\by)=(11,7)$.
     $\bd(\bx)=(011,001)=(3,1)$, $\bd(\by)=(001,011)=(1,3)$.
Then $\Delta L=0$ and $\Delta D=0$, so $\Delta F=0$. Thus, $\bsf(\bx)\Rightarrow\bsf(\by)$ is not a locatable substring edit.
\end{example}
\textbf{Third attempt.}
Recall that the method in the second attempt may not work only if the substring edit is a burst of substitutions that involves some patterns. By careful argument, one can find that the only case where $\Delta F=0$ in the second attempt is when one pattern $0^k1^k$ is shifted because of this burst of substitutions. So, we need a function, say $g$, such that $\Delta G\neq 0$ in this case, and thus our $f$ will be of the form $f(\cdot)=d(\cdot) + B g(\cdot) + C l(\cdot)$, where $B$ is a sufficiently large constant and $C$ is much larger than $B$. This way, if $\Delta L\neq 0$, then $\Delta F\neq 0$. If $\Delta L=0$, and $\Delta G\neq 0$, then $\Delta F\neq 0$. If $\Delta L=0$ and $\Delta G =0$, then $\bx\sseneq \by$ is a burst of substitutions that does not involve any pattern, then $\Delta D\neq 0$, thus $\Delta F\neq 0$. 

Let $n_{1^k}(\bx)$ be the number of appearances of $1^k$ in $\bx$. For example, if $\bx=111100111$, then $n_{1^2}(\bx)=5$. We will show that $n_{1^k}$ is our desired $g$. The next example provides the intuition. Note that  $N_{1^k}(\bx)$ is the sum of $n_{1^k}(\bx_i)$ over $\bx^{\cP}=(\bx_1,\dotsc,\bx_{n_{\cP}(\bx)})$.



\begin{example}\label{ex:shift}

Let $N_{1^k}^{L}(\bx)$ and $N_{1^k}^{R}(\bx)$ denote the contribution of the left side and the right side with respect to the `$|$' symbols below. Similar definitions apply to  $\by$. Below, we show two typical cases in which $\bx\sseneq \by$ is a burst of substitutions that shifts a pattern. In both cases, we show that $N_{1^k}(\bx)\neq N_{1^k}(\by)$, implying that $\Delta N_{1^k}\neq 0$. 

First, suppose $\bx=\balpha0^{\epsilon}|0^k1^k\bbeta$ and $\by=\balpha|0^{k}1^{k}1^{\epsilon}\bbeta$, for some binary string $\balpha$ and $\bbeta$. Here $0<\epsilon\leq k$. The burst of substitutions occurs at positions $[\abs{\balpha}+k+1,\abs{\balpha}+k+\epsilon]$. Below is an example with $k=10,\epsilon=7$. 
\begin{align*}
    \bx &= \balpha \overbrace{0000000}^{0^\epsilon}\makebox[0pt]{\text{$|$}}\overbrace{000\textcolor{red}{\zero\zero\zero\zero\zero\zero\zero}}^{0^k}\overbrace{1111111111}^{1^k}\bbeta\\
    \by &=\balpha|\underbrace{0000000000}_{0^k}\underbrace{\textcolor{red}{\one\one\one\one\one\one\one}111}_{1^k}\underbrace{1111111}_{1^\epsilon}
\bbeta
\end{align*}
In this case, we can see 
        $N_{1^k}^{L}(\bx)=N_{1^k}^{L}(\by)$, and because of the $1^{k+\epsilon}$ in $\by$ compared to the $1^k$ in $\bx$, we have $N_{1^k}^{R}(\bx)=N_{1^k}^{R}(\by)-\epsilon.$ So $N_{1^k}(\bx)\neq N_{1^k}(\by)$, and thus $\Delta N_{1^k}\neq 0$.

Second, suppose $\bx=\balpha|0^{k}1^k0^{k-\epsilon}1^k\bbeta$ and $\by=\balpha0^{k}1^{k-\epsilon}|0^{k}1^k\bbeta$. Here $0<\epsilon< k$ (when $\epsilon=k$, this edit is also described by the previous case). The burst of substitutions occurs at positions $[\abs{\balpha}+2k-\epsilon+1,\abs{\balpha}+2k]$. An example with $k=10,\epsilon=7$ is shown below.
\begin{align*}
    \bx &= \balpha \makebox[0pt]{\text{$|$}}\overbrace{0000000000}^{0^k}\overbrace{111\textcolor{red}{\one\one\one\one\one\one\one}}^{1^k}\overbrace{000}^{0^{k-\epsilon}}\overbrace{1111111111}^{1^k}\bbeta\\
    \by &=\balpha\underbrace{0000000000111}_{0^k1^{k-\epsilon}}\makebox[0pt]{\text{$|$}}\underbrace{\textcolor{red}{\zero\zero\zero\zero\zero\zero\zero}000}_{0^k}\underbrace{1111111111}_{1^k}
\bbeta
\end{align*}
        In this case, we can see 
        $N_{1^k}^{L}(\bx)=N_{1^k}^{L}(\by)$, and because there are two $1^k$'s between `$|$' and $\beta$ in $\bx$ compared to only one $1^k$ between `$|$' and $\beta$ in $\by$, we have 
 $N_{1^k}^{R}(\bx)=N_{1^k}^{R}(\by)+1.$ So $N_{1^k}(\bx)\neq N_{1^k}(\by)$, and thus $\Delta N_{1^k}\neq 0$.
\end{example}
It can be shown that the two cases in the above example are the only two key cases in which a burst of substitutions shifts a pattern (see full version).
We can now formally construct $f$.
For a string $\bx$, let both $l(\bx)$ and $L(\bx)$ represent the length of $\bx$, let
$n_{1^k}(\bx)$ be the number of $1^k$ inside $\bx$, and let $d(\bx)$ be as in the second attempt. Let 
$B=3 (2^k)$, and $C=40k(2^k)$, and define 
\begin{align}\label{eq:fconstruction}
   f(\bx):= d(\bx)+Bn_{1^k}(\bx)+Cl(\bx). 
\end{align}
If $\bx\sseneq \by$, then $\bsf(\bx)\Rightarrow\bsf(\by)$ satisfies the second condition of locatability. It is not hard to prove that $\bsf(\bx)\Rightarrow\bsf(\by)$ also satisfies the first condition of locatability (see full version). In summary, we have the following Lemma.

\begin{lemma}\label{lem:locatable}
    If $\bx\sseneq \by$, then $\bsf(\by)$ is obtained from $\bsf(\bx)$ by a locatable $3$-substring edit.
\end{lemma}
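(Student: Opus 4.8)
The plan is to verify the two conditions of Definition~\ref{def:locatable} directly for the substring edit $\bsf(\bx)\Rightarrow\bsf(\by)$, relying on the structural analysis developed in the three attempts. Throughout, I would write $\Delta F = F(\bx)-F(\by)$, and recall from the construction $f(\cdot)=d(\cdot)+B n_{1^k}(\cdot)+C l(\cdot)$ with $B=3(2^k)$, $C=40k(2^k)$, so that $\Delta F = \Delta D + B\,\Delta N_{1^k} + C\,\Delta L$. The first task is to establish that the edit is in fact a $3$-substring edit on $\bsf(\bx)$; this follows because a single $k$-substring edit on $\bx$ can alter at most a bounded number of parts of the partition $\cP$ (the edited window, together with patterns immediately adjacent to it, can merge, split, or shift by at most a constant number of parts), so the induced change on $\bsf(\bx)$ deletes and inserts at most $3$ symbols.

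For the \emph{second} condition of locatability ($\abs{\bsf(\bx)}\neq\abs{\bsf(\by)}$ or $\Delta F\neq 0$), I would follow the case split laid out in the third attempt. If $n_{\cP}(\bx)\neq n_{\cP}(\by)$, the lengths of $\bsf(\bx)$ and $\bsf(\by)$ differ and we are done. Otherwise assume $n_{\cP}(\bx)=n_{\cP}(\by)$ and show $\Delta F\neq 0$ via a nested dichotomy on $\Delta L$, $\Delta N_{1^k}$, and $\Delta D$. The key is the hierarchy of constant sizes: since $\abs{\Delta D}\le 2^k$ and $\abs{\Delta N_{1^k}}$ is bounded (a $k$-substring edit changes the count of $1^k$-occurrences by at most a constant, which I would bound explicitly to confirm $C\gg B\abs{\Delta N_{1^k}}$ and $B\gg\abs{\Delta D}$), the lowest-order nonzero term dominates. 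Concretely: if $\Delta L\neq 0$, the $C\,\Delta L$ term dominates and $\Delta F\neq 0$; if $\Delta L=0$ but $\Delta N_{1^k}\neq 0$, the $B\,\Delta N_{1^k}$ term dominates $\Delta D$ and $\Delta F\neq 0$; and if $\Delta L=\Delta N_{1^k}=0$, then the edit is a burst of substitutions that shifts no pattern, so it is confined within a single part, whence $\Delta D\neq 0$ by the substitution-detecting property of $d$.

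\textbf{The main obstacle} is the middle case, $\Delta L=0$ with a burst of substitutions that touches patterns. Here I must prove that $\Delta N_{1^k}\neq 0$ \emph{whenever} the substitution shifts a pattern, and that if no pattern is shifted then the edit stays within one part so that $\Delta D\neq 0$. This is exactly the content illustrated by Example~\ref{ex:shift}, and the real work is the claim (deferred to the full version) that the two configurations there are the only essential ways a length-preserving burst of substitutions can shift a $0^k1^k$ pattern. I would handle this by arguing that, since the pattern $0^k1^k$ is self-synchronizing enough that its occurrences cannot overlap, a length-preserving edit that relocates a pattern boundary must either extend a run of $1$'s past a pattern's right end (first case, giving $\Delta N_{1^k}=\pm\epsilon\neq 0$) or coalesce/separate two adjacent patterns (second case, changing an $N_{1^k}$ contribution by $\pm 1$). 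In both, $\Delta N_{1^k}\neq 0$ as computed in the example.

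For the \emph{first} condition, $\abs{\Delta F}<f(\by_i)=F_i$ for every resulting part, I would note that $\abs{\Delta F}$ is bounded by a constant depending only on $k$ (since $\Delta D$, $\Delta N_{1^k}$, and $\Delta L$ are each bounded by constants times $k$ and $2^k$), whereas every symbol $f(\by_i)=d(\by_i)+B\,n_{1^k}(\by_i)+C\,l(\by_i)$ satisfies $f(\by_i)\ge C\,l(\by_i)\ge C$ because each part has length at least $1$; choosing $C$ large enough relative to the bound on $\abs{\Delta F}$ then forces $\abs{\Delta F}<f(\by_i)$ for all $i$. With both conditions verified and the edit shown to be a $3$-substring edit, $\bsf(\by)$ is obtained from $\bsf(\bx)$ by a locatable $3$-substring edit, completing the proof.
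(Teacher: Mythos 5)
Your handling of the second locatability condition takes the same route as the paper: the weighted sum $f=d+Bn_{1^k}+Cl$, the hierarchy of constants so that the lowest-order nonzero term among $C\Delta L$, $B\Delta N_{1^k}$, $\Delta D$ dominates, and the reduction of the pattern-shifting substitution case to the two configurations of Example~\ref{ex:shift}. Like the paper, you defer the ``these are the only two configurations'' claim and the claim that the induced edit involves at most $3$ symbols of $\bsf(\bx)$ to a structural argument rather than proving them, so there is no real divergence there.

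The genuine gap is in your verification of the \emph{first} condition of Definition~\ref{def:locatable}, where your argument is circular. You bound $\abs{\Delta F}$ by a constant depending only on $k$ and then propose to ``choose $C$ large enough relative to the bound on $\abs{\Delta F}$.'' But $\Delta F=\Delta D+B\Delta N_{1^k}+C\Delta L$ contains the term $C\Delta L$, and $\abs{\Delta L}$ can be as large as $k$, so the bound on $\abs{\Delta F}$ is itself of order $Ck$; no choice of $C$ makes $C\ge Ck$ once $k\ge 2$. Hence your lower bound $f(\by_i)\ge C\,l(\by_i)\ge C$, derived from ``each part has length at least $1$,'' cannot close the inequality $\abs{\Delta F}<f(\by_i)$. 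What actually saves the inequality for almost all parts is a lower bound on part \emph{lengths}, not the size of $C$: every part other than the first begins with the pattern $0^k1^k$, and since occurrences of $0^k1^k$ cannot overlap, every such part has length at least $2k$, giving $f(\by_i)\ge 2kC$, which does exceed $Ck+O(k)\cdot B+2^k$ for the paper's choice of constants. Moreover, the first part is an edge case your proof silently gets wrong: it need not have length at least $1$ (if $\by$ begins with $0^k1^k$ the first part is empty), and whenever it is shorter than roughly $k$ symbols the required inequality $\abs{\Delta F}<f(\by_1)$ can genuinely fail --- for instance, an edit far from the start with $\abs{\Delta L}=k$ gives $\abs{\Delta F}\approx Ck$ while $f(\by_1)$ can be as small as $0$. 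So the first part must be treated separately (this is precisely the portion of the proof the paper relegates to its full version), and as written your argument would conclude the lemma from a false inequality.
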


\begin{lemma}\label{lem:8deltaCstring}
    If $\bx$ is $(\cP,\delta)$-dense and $\bx\rightarrow\by $, then $\bsf(\bx)$ and $\bsf(\by)$ are $8\delta C$-bounded strings.
\end{lemma}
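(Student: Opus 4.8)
The plan is to bound each coordinate of $\bsf(\bx)$ and of $\bsf(\by)$ by $8\delta C$ individually; a coordinate is a value $f(\bz)$ for a part $\bz$, and by \eqref{eq:fconstruction} it suffices to control the three ingredients of $f$. For any part $\bz$ of length $\ell$ one has $d(\bz)\le 2^k$ (as $d(\cdot)\in[2^k]$), $n_{1^k}(\bz)\le \ell$ (each of the $\le \ell$ length-$k$ windows contributes at most one), and $l(\bz)=\ell$, so $f(\bz)\le 2^k+B\ell+C\ell$. With $B=3(2^k)$ and $C=40k(2^k)$ we have $2^k\le C$ and $4B\le C$, so $f(\bz)\le 3C\ell$ whenever $\ell\le\delta$ and $f(\bz)\le 6C\delta$ whenever $\ell\le 4\delta$. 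Since $\bx$ is $(\cP,\delta)$-dense, every part $\bx_i$ has $\ell\le\delta$, giving $f(\bx_i)\le 3C\delta\le 8\delta C$; thus $\bsf(\bx)$ is $8\delta C$-bounded immediately.

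The work is entirely in $\bsf(\by)$, because $\by$ itself need not be $(\cP,\delta)$-dense: the edit can destroy a pattern and thereby merge two parts. The key step is the structural claim that every part of $\by$ nevertheless has length at most $3\delta+k$. To prove it, I would first observe that two occurrences of $\bp=0^k1^k$ can never overlap (a short check using the $0^k$-then-$1^k$ shape), so $\cP$ cuts a string precisely at the left end of every occurrence of $\bp$, and by density consecutive occurrence-starts in $\bx$ are at distance at most $\delta$. Writing the edit as $\bx\xRightarrow{j:a,b}\by$ with $a,b\le k$, every occurrence of $\bp$ lying entirely to the left or entirely to the right of the deleted window $[j,j+a-1]$ survives (the right ones merely shifted by $b-a$) and still parses $\by$ there; the only occurrences that can vanish are those meeting this window. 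Since occurrences are $\ge 2k$ apart and the window has length $a\le k$, at most two occurrences meet it, and any occurrences newly created inside $\bv$ or straddling its boundaries only split parts further, so they can only shorten parts.

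Consequently, letting $p_L$ and $p_R$ be the starts of the nearest surviving occurrences on the left and on the right of the edit, at most two destroyed occurrences lie strictly between them; these (at most four) consecutive occurrence-starts have consecutive gaps each $\le\delta$, so $p_R-p_L\le 3\delta$. The single part of $\by$ spanning the edit runs from $p_L$ to the shifted copy of $p_R$, hence has length $(p_R-p_L)+(b-a)\le 3\delta+k$, while every other part of $\by$ coincides with a part of $\bx$ and has length $\le\delta$. (The degenerate cases where the edit precedes the first occurrence or follows the last are identical.) Since each part $\bx_i$ with $i\ge2$ already contains a full pattern of length $2k$, we may take $\delta\ge 2k$, so $3\delta+k\le 4\delta$; plugging $\ell\le 4\delta$ into the bound above gives $f(\by_i)\le 6C\delta\le 8\delta C$, which finishes the proof.

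I expect the structural claim to be the main obstacle: the clean bound $3\delta+k$ rests on the two facts that at most two pattern occurrences are destroyed (because $\abs{\bp}=2k$ exceeds the window length) and that newly created occurrences never lengthen a part. Once these are in place, the result is just a substitution of the constants $B$ and $C$.
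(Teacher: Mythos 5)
Your proof is correct. Note that the paper itself never proves this lemma — it is stated without proof in this (conference) version, with the argument deferred to the full version — so there is nothing in-paper to compare against; your route (non-overlapping occurrences of $0^k1^k$ imply the length-$\le k$ edit window destroys at most two pattern occurrences, new occurrences only shorten parts, hence every part of $\by$ has length at most $3\delta+k\le 4\delta$, then substitute $B$ and $C$) is the natural argument that the slack in the constant $8\delta C$ is evidently designed to absorb. The only fine point is your step ``we may take $\delta\ge 2k$'': it needs $\bx$ to contain at least one pattern (otherwise density says nothing beyond $|\bx|\le\delta$), but this is harmless since the code always uses $\delta=k2^{2k+3}\log n\gg 2k$.
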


\paragraph{Correcting an edit with known approximate location} 
We start by adapting a result on systematic codes correcting multiple edits \cite{sima2020optimal_Systematic} to a hash correcting a $k$-substring edit, since  a $k$-substring edit can also be viewed as $k$ edits. The adapted result states that a substring edit in a string of length $n$ can be corrected using $4k\log n+o(\log n)$ bits of redundancy. Although our goal is to achieve about $\log n$ redundancy, this result will be useful in our construction.
\begin{lemma}[c.f.~\cite{sima2020optimal_Systematic}]\label{fact:syndrome}
    There exists a hash $\phi_n:\{0,1\}^n\rightarrow \{0,1\}^{4k\log n+o(\log n)}$ that can correct a single $k$-substring edit. Namely, if $\br\in \{0,1\}^n$ and $\br\rightarrow\bs$, then $\br$ can be recovered from $\bs$ and $\phi_n(\br)$.
\end{lemma}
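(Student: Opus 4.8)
The plan is to exploit the observation, already noted in the text preceding the statement, that a $k$-substring edit is a special case of $k$ elementary edits, and then to invoke the systematic multiple-edit-correcting construction of \cite{sima2020optimal_Systematic}. First I would pin down the reduction precisely. Suppose $\br\rightarrow\bs$, so that $\bs$ is obtained from $\br$ by replacing a substring $\bu$ with a string $\bv$, where $\abs{\bu},\abs{\bv}\le k$. The Levenshtein edit distance between $\bu$ and $\bv$ is at most $\max(\abs{\bu},\abs{\bv})\le k$, so $\bv$ can be obtained from $\bu$ by a sequence of at most $k$ single-symbol insertions, deletions, and substitutions. Since these operations are confined to the window formerly occupied by $\bu$, the transformation $\br\rightarrow\bs$ is realized by at most $k$ such elementary edits applied to $\br$. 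Hence every $k$-substring edit lies inside the larger class of at most $k$ arbitrary edits.

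Second, I would appeal to the systematic code of \cite{sima2020optimal_Systematic}, which in its edit-correcting form supplies, for each length $n$, a hash of $4t\log n+o(\log n)$ bits allowing recovery of any $\br\in\{0,1\}^n$ from a version corrupted by at most $t$ edits. Setting $t=k$ yields a hash $\phi_n$ of redundancy $4k\log n+o(\log n)$ bits. Because the decoder receives $\phi_n(\br)$ together with $\bs$ and knows the nominal length $n$, it can run the decoder of that code to reconstruct $\br$; by the first step the single $k$-substring edit presented to it is a legal input for that decoder, so recovery succeeds. This yields exactly the claimed guarantee.

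The step I expect to require the most care is confirming that the construction of \cite{sima2020optimal_Systematic}, which is most naturally stated for deletions, applies to the mixture of insertions, deletions, and substitutions that a substring edit may induce, and with the stated $4k\log n$ leading constant. Concretely, I would verify that the systematic $t$-deletion hash already corrects any $t$ combined insertions and deletions, via the standard Levenshtein equivalence for the synchronization channel, and that substitutions are absorbed without changing the order of the redundancy, so that the leading term remains $4k\log n$ and the penalty is hidden in the $o(\log n)$ slack. A secondary, purely bookkeeping point is that $\abs{\bs}$ may differ from $n$ by up to $k$, so the decoder must use the known value of $n$ (and, if necessary, a few length-encoding bits absorbed into the $o(\log n)$ term) to disambiguate; this does not affect the leading-order redundancy.
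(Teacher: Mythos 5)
Your proposal matches the paper's approach exactly: the paper justifies this lemma by the same observation that a $k$-substring edit ``can also be viewed as $k$ edits'' (your edit-distance bound $\max(\abs{\bu},\abs{\bv})\le k$ makes this precise), followed by invoking the systematic multiple-edit-correcting construction of \cite{sima2020optimal_Systematic} with $t=k$ to obtain the $4k\log n+o(\log n)$-bit hash. Your closing caution is also moot in the paper's framing, since the cited result is treated there as already correcting mixtures of insertions, deletions, and substitutions, not just deletions.
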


 Let $\psi_A(\bx):\{0,1\}^n\rightarrow [2^{4k\log A+o(\log A)}]$ be defined as
\begin{multline*}
    \psi_A(\bx) = \left(\left(\sum_{i \text{ odd}}\phi_A(\bx_i)\right)\bmod 2^{4k\log A+o(\log A)},\right.\\\left.\left(\sum_{i \text{ even}}\phi_A(\bx_i)\right)\bmod 2^{4k\log A+o(\log A)}\right),
\end{multline*}
 where the $\bx_i$ result from partitioning $\bx$ into parts of length $A$ 
 (note that in the definition of the the map $\psi_A(\bx)$ we are not partitioning $\bx$ according to $\bx^{\cP}$)
 . 

\Cref{cla:oddeven} is an adaptation of a technique used in \cite{bitar2021optimal} to our substring edit scenario, which will  allow us to correct a substring edit whose position is approximately known.
\begin{lemma}[c.f.~\cite{bitar2021optimal}]\label{cla:oddeven}
    If $\bx\xrightarrow[]{} \by$, then one can recover $\bx$ given $\by$ and $\psi_A(\bx)$ and an interval of length $A$ that contains the edit position.
\end{lemma}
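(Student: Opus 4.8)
The plan is to use the interval to pin the edit down to a constant number of blocks of the length-$A$ partition underlying $\psi_A$, recover every other block directly from $\by$, and then peel off the hashes of the affected blocks from $\psi_A(\bx)$ using the odd/even structure. First I would observe that, since $\bx\in\{0,1\}^n$ with $n$ fixed, the length shift $\Delta:=\abs{\by}-n$ is known; moreover the edited region is contained in the given interval of length $A$, which meets at most two consecutive length-$A$ blocks, so it lies inside $\bx_t\bx_{t+1}$ for a known index $t$. Everything outside these two blocks is then recoverable from $\by$: the prefix $\bx_1\dotsm\bx_{t-1}$ is the unchanged prefix $\by_{[1,(t-1)A]}$, and the suffix $\bx_{t+2}\bx_{t+3}\dotsm$ is the unchanged tail of $\by$, shifted to the end by $\Delta$, hence equal to $\by_{[\abs{\by}-(n-(t+1)A)+1,\,\abs{\by}]}$.

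Next I would recover the two hashes $\phi_A(\bx_t)$ and $\phi_A(\bx_{t+1})$. The key point is that $t$ and $t+1$ have opposite parities, so exactly one affected block is odd-indexed and one is even-indexed. Subtracting $\sum\phi_A(\bx_i)$ over all the already-recovered odd-indexed blocks from the first coordinate of $\psi_A(\bx)$, and likewise for the even coordinate, isolates $\phi_A(\bx_t)$ and $\phi_A(\bx_{t+1})$ modulo $M=2^{4k\log A+o(\log A)}$; since each hash value already lies in $[M)$, these residues equal the hashes exactly.

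Finally I would correct the two affected blocks. Reading $\bw:=\by_{[(t-1)A+1,\,\abs{\by}-(n-(t+1)A)]}$ off $\by$ exhibits the edited image of $\bc:=\bx_t\bx_{t+1}$ under a single $k$-substring edit. When the edit lies strictly inside one of the two blocks, the other block is read off $\bw$ directly, and the corrupted block is recovered from its image together with its hash via Lemma~\ref{fact:syndrome}; this is essentially the argument of \cite{bitar2021optimal}. Reassembling all blocks then yields $\bx$.

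The hard part is the case in which the single $k$-substring edit straddles the boundary between $\bx_t$ and $\bx_{t+1}$: then both blocks are corrupted at once and we do not know a priori where $\bx_t$ ends inside $\bw$. I would handle this by noting that a straddling edit deletes only a length-$\le k$ suffix of $\bx_t$ and a length-$\le k$ prefix of $\bx_{t+1}$ (plus the inserted symbols in between), so it induces a genuine $k$-substring edit on each block \emph{separately}; enumerating the $O(k)$ possible split positions of the boundary inside $\bw$, applying the decoder of Lemma~\ref{fact:syndrome} with $\phi_A(\bx_t)$ and $\phi_A(\bx_{t+1})$ to the corresponding candidate images, and retaining the unique pair whose concatenation maps to $\bw$ by a single $k$-substring edit, recovers $(\bx_t,\bx_{t+1})$. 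Establishing that exactly one candidate passes this consistency check—i.e.\ the uniqueness of the reconstruction—is the delicate point, and is precisely where the single-edit-correcting guarantee of $\phi_A$ must be invoked with care.
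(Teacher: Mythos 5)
Your overall architecture is the right one and matches the technique of \cite{bitar2021optimal} that the paper invokes (the paper itself defers this proof to the full version): the interval pins the edit inside two consecutive blocks $\bx_t\bx_{t+1}$ of opposite parity, every other block is read off $\by$ directly (the suffix shifted by the known length change $\Delta$), and the two unknown hashes $\phi_A(\bx_t)$, $\phi_A(\bx_{t+1})$ are peeled out of the odd and even sums in $\psi_A(\bx)$. The gap is in your last step, and it is genuine. Lemma~\ref{fact:syndrome} gives a one-sided guarantee: \emph{if} $\bs$ is obtained from $\br$ by a single $k$-substring edit, then the decoder run on $\bs$ and $\phi_A(\br)$ outputs $\br$. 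It promises nothing when that hypothesis fails. In your enumerate-and-verify scheme, all but one of the $O(k)$ boundary guesses feed the decoder a candidate image that need not be within one $k$-substring edit of the true block; the decoder may then output some $\hat{\bx}_t\neq\bx_t$, and your consistency check cannot rule it out: ``the concatenation maps to $\bw$ by a single $k$-substring edit'' can hold for a wrong pair, and comparing hash values does not help either, since $\phi_A$ is lossy and nothing excludes collisions. So the uniqueness you flag as ``delicate'' is not merely unproved---it does not follow from any stated property of $\phi_A$. The same objection already hits your non-straddling case: you must also guess whether the edit sits in $\bx_t$ or in $\bx_{t+1}$ (the boundary between the two images inside $\bw$ is at $A$ or at $A+\Delta$ accordingly), and verifying that guess runs into the identical soundness problem.

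The repair, which is how \cite{bitar2021optimal} proceeds and which removes all guessing, is to hand the decoder only strings that provably satisfy its promise. Let $\bw$ be the stretch of $\by$ spanning the two affected blocks, of known length $2A+\Delta$, and consider the prefix of $\bw$ of length $A-2k$. If the edit starts after position $A-2k$ of $\bx_t\bx_{t+1}$, this prefix equals $\bx_t$ with its last $2k$ symbols deleted; if it starts at or before position $A-2k$, then the edit lies wholly inside $\bx_t$ and the prefix equals the image of $\bx_t$ truncated by $2k+\Delta\le 3k$ symbols at its right end, i.e., it is obtained from $\bx_t$ by the original edit plus an end-deletion of at most $3k$ symbols. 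In every case this prefix is within $4k$ edits of $\bx_t$, and symmetrically the suffix of $\bw$ of length $A-2k$ is within $4k$ edits of $\bx_{t+1}$. Hence if one instantiates $\phi_A$ as the hash of \cite{sima2020optimal_Systematic} correcting $4k$ edits rather than $k$ (a constant-factor change in redundancy, so $\psi_{O(\delta^2)}$ still costs $O(\log\log n)$ bits), both affected blocks are decoded deterministically and provably correctly, with no enumeration and no uniqueness question left open.
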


\subsection{Substring edit-correcting code and its redundancy}
We now put the components previously discussed together to construct the error-correcting code. Let $\cP$ be as in Section~\ref{subsec:components}, and let $f$ and $C$ be the one in equation~\eqref{eq:fconstruction}. Let 
\begin{align*}    
h(\bx):=&\left(\VT(\bsf(\bx))\bmod 2000Cn, \vphantom{\sum\nolimits_{i=1}^{n_{\cP}(\bx)}}\right.\\ &\quad\left.\left(\sum\nolimits_{i=1}^{n_{\cP}(\bx)}f_i(\bx)\right) \bmod 10Ck, n_{\cP}(\bx)\bmod 5 \right).
\end{align*}

For $c_1,c_2\in \mathbb{N}$, let 
\begin{align*}
        \cC=\{\bx \text{ is } (\cP, \delta) \text{-dense: }   h(\bx)=c_1, 
         \psi_{O(\delta^2)}(\bx)=c_2 \}.
    \end{align*}
    The next lemma will be used to show that the redundancy arising from restricting $\bx$ to be $(\cP, \delta)\text{-dense}$ is negligible.
\begin{lemma}\label{cla:manydense}
Let $\delta=k2^{2k+3}\log n$. For a uniformly randomly chosen $\bx\in \{0,1\}^n$, $$\Pr[\bx \text{ is not } (\cP,\delta)\text{-dense}]<1/n.$$
\end{lemma}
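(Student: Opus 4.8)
The plan is to reduce the event that $\bx$ is not $(\cP,\delta)$-dense to the event that $\bx$ contains a long window in which no copy of the pattern $\bp = 0^k1^k$ begins, and then to control that event by a union bound over window positions together with the mutual independence of disjoint length-$2k$ blocks. I may assume $\delta \le n$, since otherwise every part has length at most $n \le \delta$ and the bad event is empty.

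First I would prove the reduction: if $\bx$ is not $(\cP,\delta)$-dense, then some interval of $\delta$ consecutive positions of $\bx$ contains no starting index of an occurrence of $\bp$. Recall that the parts of $\bx^{\cP}$ are delimited by the starts of successive occurrences of $\bp$, so a part of length exceeding $\delta$ forces a gap larger than $\delta$ either between two consecutive pattern starts, or before the first, or after the last. In the interior case, if consecutive pattern starts sit at positions $s$ and $s'$ with $s' - s > \delta$, then by definition of ``consecutive'' no pattern begins in $[s+1, s+\delta]$, and $s + \delta < s' \le n$ shows this window lies inside $[1,n]$; the first-part and last-part cases are identical, using $[1,\delta]$ and $[s+1, s+\delta]$ for the final pattern start $s$. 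This yields the desired pattern-free window.

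Next I would bound, for a fixed window $I$ of length $\delta$, the probability that no occurrence of $\bp$ starts in $I$. Partitioning $I$ into $\floor{\delta/(2k)}$ disjoint blocks of length $2k$, and noting that each such block equals the fixed string $\bp$ with probability exactly $2^{-2k}$ while disjoint blocks are independent, the absence of any pattern start in $I$ forces each of these blocks to differ from $\bp$; hence
\[
\Pr[\text{no occurrence of } \bp \text{ starts in } I] \le \left(1 - 2^{-2k}\right)^{\floor{\delta/(2k)}}.
\]
Substituting $\delta = k2^{2k+3}\log n$ gives $\floor{\delta/(2k)} = 2^{2k+2}\log n$, and using $1 - x \le e^{-x}$,
\[
\Pr[\text{no occurrence of } \bp \text{ starts in } I] \le e^{-2^{-2k}\cdot 2^{2k+2}\log n} = e^{-4\log n}.
\]
A union bound over the at most $n$ windows of length $\delta$ then gives
\[
\Pr[\bx \text{ is not } (\cP,\delta)\text{-dense}] \le n \cdot e^{-4\log n} < 1/n,
\]
the final inequality holding with room to spare for either base of $\log$, because the constant $2^{2k+3}$ in $\delta$ is chosen so that the per-window exponent beats the union-bound factor $n$ by a polynomial margin.

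I expect the main obstacle to be the reduction step rather than the probability estimate: one must check carefully, across the first-part, interior, and last-part cases, that non-density really does produce a pattern-free window lying entirely within $[1,n]$, so that the disjoint $2k$-blocks used in the independence argument all fit inside $\bx$. Once this is in place the union bound is routine, and the generous constant in $\delta$ absorbs the floor and off-by-one slack.
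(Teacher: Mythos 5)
Your proof is correct: the reduction from non-density to a length-$\delta$ window of $[1,n]$ containing no starting index of $\bp=0^k1^k$ (with the first-part, interior, and last-part cases all checked), followed by a union bound over at most $n$ windows and the independence of the $\floor{\delta/(2k)}$ disjoint $2k$-blocks inside a window, is exactly the standard argument for such density lemmas, and the constant $2^{2k+3}$ indeed leaves enough slack to absorb the floor and the choice of logarithm base. The conference version of the paper omits its own proof of this lemma (deferring to the full version), but your approach matches the technique the paper inherits from its cited predecessors, so there is nothing to flag beyond the cosmetic point that $\floor{\delta/(2k)}=2^{2k+2}\log n$ should strictly be an inequality $\floor{\delta/(2k)}\geq 2^{2k+2}\log n-1$, which you already account for.
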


\begin{theorem}\label{thm:1substringedit}
    $\cC$ corrects one $k$-substring edit. Moreover, for $\delta=k2^{2k+3}\log n$,  it has redundancy $\log n+O(\log\log n)$. 
\end{theorem}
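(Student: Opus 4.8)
The plan is to prove correctness and then bound redundancy separately, since the statement has two parts.

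\textbf{Correctness.} I would assume $\bx \in \cC$ is transmitted and $\by$ is received with $\bx \to \by$, and show the decoder can recover $\bx$ from $\by$ together with the stored hash values $c_1 = h(\bx)$ and $c_2 = \psi_{O(\delta^2)}(\bx)$. The first step is to split into two cases using the third component of $h$, namely $n_{\cP}(\bx) \bmod 5$. The key subtlety is that a single $k$-substring edit to $\bx$ can change the number of patterns $\bp = 0^k1^k$, and hence $n_{\cP}$, only by a bounded amount; I would argue that since a $k$-substring edit touches a window of length at most $k$ and each pattern has length $2k$, the quantity $n_{\cP}$ can change by at most a small constant (at most $2$ or so), so that $n_{\cP}(\bx) \bmod 5$ recovers $n_{\cP}(\bx)$ exactly from $n_{\cP}(\by)$. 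The second step handles the trivial case $\bx = \by$ (a null edit), which is detectable and requires nothing further. For the nontrivial case $\bx \sseneq \by$, the third step applies Lemma~\ref{lem:locatable} to conclude that $\bsf(\by)$ is obtained from $\bsf(\bx)$ by a locatable $3$-substring edit. The fourth step invokes Lemma~\ref{lem:8deltaCstring} so that $\bsf(\bx), \bsf(\by)$ are $8\delta C$-bounded; then I apply Lemma~\ref{lem:locationA} with $A = 8\delta C$ and $K = 3$, checking that the stored VT sketch modulus $2000Cn$ exceeds $8K^2(\sum f_i(\bx) + A) = 72(\sum f_i(\bx)+A)$, and that $\abs{\bsf(\bx)} - \abs{\bsf(\by)} = n_{\cP}(\bx) - n_{\cP}(\by)$ together with $\Delta F \bmod 10Ck$ supply exactly the side information Lemma~\ref{lem:locationA} demands. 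This pins the edit position in $\bsf(\by)$ to an interval of length $6K^2 A = 54 \cdot 8\delta C = O(\delta^2)$ parts.

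\textbf{From parts to bits.} The fifth step translates an interval of $O(\delta)$ parts in $\bsf(\by)$ into an interval of bit positions in $\by$. Because $\bx$ is $(\cP,\delta)$-dense, each part $\bx_i$ has length at most $\delta$, so an interval of $O(\delta)$ consecutive parts spans at most $O(\delta^2)$ bits — this is the technique borrowed from \cite{bitar2021optimal}. I would be careful to account for the edit's own length (at most $k$) when converting the part-interval to a bit-interval in $\by$ versus $\bx$, but this only inflates the interval by an additive constant. The sixth step feeds this bit-interval of length $O(\delta^2)$, together with the stored value $\psi_{O(\delta^2)}(\bx) = c_2$ and the received $\by$, into Lemma~\ref{cla:oddeven}, which recovers $\bx$ exactly. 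This completes correctness.

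\textbf{Redundancy.} The redundancy is $n$ minus $\log \abs{\cC}$. I would first count the redundancy contributed by each stored quantity: the VT sketch $\VT(\bsf(\bx)) \bmod 2000Cn$ costs $\log(2000Cn) = \log n + O(1)$ bits since $C$ is a constant; the term $(\sum f_i(\bx)) \bmod 10Ck$ and $n_{\cP}(\bx) \bmod 5$ each cost $O(1)$ bits; and $\psi_{O(\delta^2)}(\bx)$ maps into $[2^{4k\log(O(\delta^2)) + o(\log(\delta^2))}]$ by Lemma~\ref{fact:syndrome}, costing $O(\log \delta) = O(\log\log n)$ bits because $\delta = k2^{2k+3}\log n = O(\log n)$. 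Summing, fixing all hash values costs $\log n + O(\log\log n)$ bits. The seventh and final step accounts for the density restriction: by Lemma~\ref{cla:manydense}, a uniformly random $\bx$ fails to be $(\cP,\delta)$-dense with probability below $1/n$, so restricting to dense strings removes at most a $1/n$ fraction and costs $o(1)$ bits of redundancy. A standard pigeonhole/averaging argument then guarantees a choice of $(c_1,c_2)$ for which $\cC$ is large enough that its redundancy is $\log n + O(\log\log n)$, as claimed.

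\textbf{Main obstacle.} The step I expect to be most delicate is verifying that the side information actually stored in $h$ — in particular the \emph{reduced} quantities $\Delta F \bmod 10Ck$ and $n_{\cP}(\bx)\bmod 5$ — suffices to reconstruct the \emph{exact} values $\sum z_i - \sum w_i$ and $\abs{\bz} - \abs{\bw}$ that Lemma~\ref{lem:locationA} requires as inputs, rather than merely their residues. This needs an a priori bound showing that $\abs{\Delta F}$ and $\abs{n_{\cP}(\bx) - n_{\cP}(\by)}$ are strictly smaller than their respective moduli whenever a single $k$-substring edit occurs, so that the residues determine the true signed differences. Establishing these bounds — which hinges on the constants $B = 3(2^k)$, $C = 40k(2^k)$ being chosen large enough to separate the contributions of $d$, $n_{1^k}$, and $l$ in $f$, yet small enough that $\abs{\Delta F}$ stays below $10Ck$ — is where the careful bookkeeping lives, and it is the crux tying the mapping $f$ back to the locatability machinery.
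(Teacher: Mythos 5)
Your proposal is correct and follows essentially the same route as the paper's proof: detect the null edit via the residues of $F$ and $n_{\cP}$ stored in $h$, otherwise use Lemma~\ref{lem:locatable} and Lemma~\ref{lem:8deltaCstring} to feed $\bsf(\bx),\bsf(\by)$ into Lemma~\ref{lem:locationA}, convert the resulting $O(\delta)$-part interval into an $O(\delta^2)$-bit interval via density, correct with Lemma~\ref{cla:oddeven}, and tally redundancy component-by-component using Lemma~\ref{cla:manydense}. The only blemish is the typo calling $6K^2A$ an interval of ``$O(\delta^2)$ parts'' (it is $O(\delta)$ parts, as your own next paragraph correctly uses), and your ``main obstacle'' — recovering the exact signed differences from the moduli $10Ck$ and $5$ — is precisely the bookkeeping the paper defers to its full version.
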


\begin{IEEEproof}[Proof sketch]
    We first prove the redundancy. Note that $h(\bx)$ needs $\log n+O(1)$ redundancy and $\psi_{O(\delta^2)}(\bx)$ needs $O(\log \log n)$ redundancy. By~\Cref{cla:manydense}, restricting $\bx$ to be $(\cP, \delta)$-dense adds $O(1)$ redundancy. So $\cC$ has redundancy $\log n+O(\log \log n)$.

To prove the error-correction ability of the code, we first show that if $\bx\to\by$, given $\by$ and $h(\bx)$, we can determine whether $\bx=\by$. By Lemma~\ref{lem:locatable} and the definition of locatable substring edit, $\bx=\by$ if and only if $\sum\nolimits_{i=1}^{n_{\cP}(\bx)}f_i(\bx)\equiv \sum\nolimits_{i=1}^{n_{\cP}(\by)}f_i(\by)\mod 10Ck$ and $n_{\cP}(\bx)\equiv n_{\cP}(\by)\mod 5$. These two equations can be checked given $\by$ and $h(\bx)$ to determine whether $\bx=\by$. If $\bx=\by$, then we just recover $\bx$ by outputting $\by$.

We will now assume $\bx\sseneq \by$ and show that we can determine the edit position to within an interval of length $O(\delta^2)$. By~\Cref{lem:locatable,lem:8deltaCstring}, we can treat $\bsf(\bx)$ and $\bsf(\by)$ as the $\bz$ and $\bw$ in Lemma~\ref{lem:locationA}, and determine the position of the locatable edit to within an interval of length $O(\delta)$. Since $\bx$ is $(\cP,\delta)$-dense, we can further determine the position of the original edit to within an interval of length $O(\delta^2)$. In the final step, by~\Cref{cla:oddeven}, we can recover $\bx$ by $\psi_{O(\delta^2)}(\bx)$.
\end{IEEEproof}

\section*{Acknowledgment}
This work was supported in part by NSF grants CIF-2312871, CIF-2312873, CIF-2144974, and CCF-2212437.

\bibliographystyle{IEEEtran}
\bibliography{bibliofile}

\begin{thebibliography}{10}
\providecommand{\url}[1]{#1}
\csname url@samestyle\endcsname
\providecommand{\newblock}{\relax}
\providecommand{\bibinfo}[2]{#2}
\providecommand{\BIBentrySTDinterwordspacing}{\spaceskip=0pt\relax}
\providecommand{\BIBentryALTinterwordstretchfactor}{4}
\providecommand{\BIBentryALTinterwordspacing}{\spaceskip=\fontdimen2\font plus
\BIBentryALTinterwordstretchfactor\fontdimen3\font minus \fontdimen4\font\relax}
\providecommand{\BIBforeignlanguage}[2]{{%
\expandafter\ifx\csname l@#1\endcsname\relax
\typeout{** WARNING: IEEEtran.bst: No hyphenation pattern has been}%
\typeout{** loaded for the language `#1'. Using the pattern for}%
\typeout{** the default language instead.}%
\else
\language=\csname l@#1\endcsname
\fi
#2}}
\providecommand{\BIBdecl}{\relax}
\BIBdecl

\bibitem{tabatabaeiyazdi2015}
\BIBentryALTinterwordspacing
S.~M.~H. Tabatabaei~Yazdi, Y.~Yuan, J.~Ma, H.~Zhao, and O.~Milenkovic, ``A {{Rewritable}}, {{Random}}-{{Access DNA}}-{{Based Storage System}},'' \emph{Scientific Reports}, vol.~5, Sep. 2015. [Online]. Available: \url{https://www.ncbi.nlm.nih.gov/pmc/articles/PMC4585656/}
\BIBentrySTDinterwordspacing

\bibitem{orlitsky1991b}
A.~Orlitsky, ``Interactive communication: Balanced distributions, correlated files, and average-case complexity,'' in \emph{[1991] {{Proceedings}} 32nd {{Annual Symposium}} of {{Foundations}} of {{Computer Science}}}, Oct. 1991, pp. 228--238.

\bibitem{tang2023}
Y.~Tang, S.~Motamen, H.~Lou, K.~Whritenour, S.~Wang, R.~Gabrys, and F.~Farnoud, ``Correcting a substring edit error of bounded length,'' in \emph{2023 IEEE International Symposium on Information Theory (ISIT)}, 2023, pp. 2720--2725.

\bibitem{bours1994codes}
P.~A.~H. Bours, ``Codes for correcting insertion and deletion errors,'' 1994.

\bibitem{levenshtein1967asymptotically}
V.~Levenshtein, ``Asymptotically optimum binary code with correction for losses of one or two adjacent bits,'' \emph{Problemy Kibernetiki}, vol.~19, pp. 293--298, 1967.

\bibitem{lenz2020optimal}
A.~Lenz and N.~Polyanskii, ``Optimal codes correcting a burst of deletions of variable length,'' in \emph{2020 IEEE International Symposium on Information Theory (ISIT)}.\hskip 1em plus 0.5em minus 0.4em\relax IEEE, 2020, pp. 757--762.

\bibitem{bitar2021optimal}
R.~Bitar, S.~K. Hanna, N.~Polyanskii, and I.~Vorobyev, ``Optimal codes correcting localized deletions,'' in \emph{2021 IEEE International Symposium on Information Theory (ISIT)}.\hskip 1em plus 0.5em minus 0.4em\relax IEEE, 2021, pp. 1991--1996.

\bibitem{Gabrysbeyond2023}
R.~Gabrys, V.~Guruswami, J.~Ribeiro, and K.~Wu, ``Beyond single-deletion correcting codes: Substitutions and transpositions,'' \emph{IEEE Transactions on Information Theory}, vol.~69, no.~1, pp. 169--186, 2023.

\bibitem{blahut2003algebraic}
R.~E. Blahut, \emph{Algebraic codes for data transmission}.\hskip 1em plus 0.5em minus 0.4em\relax Cambridge university press, 2003.

\bibitem{Cheng2014}
L.~Cheng, T.~G. Swart, H.~C. Ferreira, and K.~A.~S. Abdel-Ghaffar, ``Codes for correcting three or more adjacent deletions or insertions,'' in \emph{2014 IEEE International Symposium on Information Theory}, 2014, pp. 1246--1250.

\bibitem{schoeny2017codes}
C.~Schoeny, A.~Wachter-Zeh, R.~Gabrys, and E.~Yaakobi, ``Codes correcting a burst of deletions or insertions,'' \emph{IEEE Transactions on Information Theory}, vol.~63, no.~4, pp. 1971--1985, 2017.

\bibitem{sima2020optimal_Systematic}
J.~Sima, R.~Gabrys, and J.~Bruck, ``Optimal systematic $t$-deletion correcting codes,'' in \emph{2020 IEEE International Symposium on Information Theory (ISIT)}.\hskip 1em plus 0.5em minus 0.4em\relax IEEE, 2020, pp. 769--774.

\end{thebibliography}

\end{document}